\DeclareMathOperator*{\argmax}{argmax}
\title{A Symmetry-based Framework for Model Selection of Coral Reef Population Growth Models}
\author{Reemon Spector\thanks{\textbf{Affiliation:} Mathematical Institute, University of Oxford. \textbf{Email:} \href{mailto:spector@maths.ox.ac.uk}{\texttt{spector@maths.ox.ac.uk}}}}
\date{}
\theoremstyle{definition}
    \newtheorem{prop}{Proposition}[section]
    \newtheorem*{abst}{Abstract}
    \newtheorem*{ack}{Acknowledgements}
    \newtheorem{defn}[prop]{Definition}
    \newtheorem{nota}[prop]{Notation}
    \newtheorem{model}[prop]{Model}
    \newtheorem{exam}[prop]{Example}
    \newtheorem{alg}[prop]{Algorithm}
    \newtheorem{ques}[prop]{Question}
\theoremstyle{plain}
    \newtheorem{thm}[prop]{Theorem}
\theoremstyle{remark}
    \newtheorem{rem}[prop]{Remark}
\begin{document}

\maketitle

\begin{abst}
    The problem of selecting a model given a set of candidates remains a challenging one that pervades many scientific fields. We employ techniques from the theory of Lie groups to analyse the symmetries in differential equation models of population growth, with the aim of informing the model selection problem. To illustrate the use of Lie symmetries in model selection, we apply them to simulated data and to coral reef data from the Great Barrier Reef, demonstrating that the trivial symmetries can distinguish between candidate models. A method for finding locally optimal parameters for multi-parameter symmetries is presented, and the paper concludes with related results, some open problems, and avenues of further research.

\end{abst}

\section{Introduction}
    \paragraph{}
    Population growth models are ubiquitous in the scientific literature, particularly so in mathematical biology. The problem of selecting a model given a set of candidates is challenging \cite{Ger} and has been tackled using tools from a multitude of fields, including classical methods such as regression and information-theoretic criteria (which quantify the trade-off between model complexity and quality of fit) such as the Akaike Information Criterion (AIC) \cite{Aka,Boz,Bur} or the Bayesian Information Criterion (BIC) \cite{Sch}. More recently, the model selection problem has been considered through the lens of structural \cite{CdB, MV, Mer, Yat} and practical parameter identifiability \cite{Sim}.

    Even in the setting where several models provide an equally good fit (by classical measures) to data, selecting a particular model from a set of candidates can have significant quantitative consequences for both explanation (through underlying mechanistic assumptions \cite{Lai} or through estimated parameters) and prediction \cite{Sim}.

    In this paper we implement a framework for model selection, as described in \cite{Bor} and \cite{OBC}, using techniques from the theory of Lie groups which allow us to exploit
    the fact that the solution space to an ODE model is closed under the actions of its symmetries, but not necessarily under other transformations. In particular, we show that the so-called \textit{trivial symmetry}, a transformation inherent to many ODE models, can be used to carry out model selection in this framework, removing the potentially significant bottleneck of finding symmetries for a model. We also introduce the \textit{disagreement coefficient} as a tool used to determine (locally) optimal parameters for this framework for model selection.
    
    To do this, we provide the necessary technical background, which is readily found in the literature \cite{Hyd,Olv}.

\section{Background}
    \begin{defn}
        \label{symmetry definition}
        Given an ODE $\dot{y} = \omega(t,y)$, we define the \textit{total space} $E = T \times Y$ to be the ambient space of all possible values of the independent variable $t \in T$, and the dependent variable $y \in Y$. This is extended to the \textit{first jet space} $\mathcal{J}^{(1)} = T \times Y \times Y^{(1)}$, that also includes the derivative $\dot{y} \in Y^{(1)}$. The \textit{solution manifold} $M \subseteq \mathcal{J}^{(1)}$ is then those $(t,y,\dot{y})$ such that $\dot{y} = \omega(t,y)$.
        
        In our context, where $E \cong \mathbb{R}^2$, we seek sets of transformations $\Gamma_\varepsilon \colon (t,y) \mapsto (\hat{t},\hat{y})$, depending on a transformation parameter $\varepsilon$, satisfying the following conditions:
        \begin{enumerate}[start=1,label={(C\arabic*)}]
            \item \label{diffeomorphisms} $\Gamma_\varepsilon$ are diffeomorphisms of $E$,
            \item \label{symmetry condition} $\frac{d\hat{y}}{d\hat{t}}=\omega(\hat{t},\hat{y})$ whenever $\dot{y} = \frac{dy}{dt}=\omega(t,y)$,
            \item \label{analytic} $\hat{t}(t,y;\varepsilon)$ and $\hat{y}(t,y;\varepsilon)$ are analytic functions of $t$ and $y$ in a neighbourhood $\mathcal{U}$ of $\varepsilon = 0$, where $\Gamma_0 = \mathrm{id}$, and $\Gamma_\varepsilon \Gamma_\delta = \Gamma_{\varepsilon + \delta}$ for all $\varepsilon,\delta \in \mathcal{U}$.
        \end{enumerate}

        In general, we say a local group of transformations, $\Gamma$, is a \textit{symmetry} of the ODE if it acts on open subsets of $E$ by diffeomorphisms that, when extended via \ref{symmetry condition} to $\mathcal{J}^{(1)}$, restrict to bijections on $M$. Informally, we can think of symmetries to be transformations that map solutions to solutions.

        The group of transformations satisfying \ref{diffeomorphisms}, \ref{symmetry condition} and \ref{analytic} forms (a representation of) a local one-parameter Lie group; we will refer to such $\Gamma_\varepsilon$ as \textit{Lie symmetries}.
    \end{defn}
    
    \begin{defn}
        Given a Lie symmetry, we define the \textit{infinitesimals} $\xi(t,y)$ and $\eta(t,y)$ as the coefficients of $\varepsilon$ in the expansion of $\hat{t}$ and $\hat{y}$:
        \begin{align*}
            \hat{t}(t,y;\varepsilon) = t + \xi(t,y)\varepsilon + \mathcal{O}(\varepsilon^2), \\
            \hat{y}(t,y;\varepsilon) = y + \eta(t,y)\varepsilon + \mathcal{O}(\varepsilon^2).
        \end{align*}
        These define a smooth vector field $X$, on the total space $E$, referred to as \textit{the infinitesimal generator of the Lie group}, given by
        \[
            X = \xi(t,y)\frac{\partial}{\partial t} + \eta(t,y)\frac{\partial}{\partial y}.
        \]
    \end{defn}

    \paragraph{}
    This infinitesimal generator completely characterises $\Gamma_\varepsilon$ through the following theorem, due to Lie \cite{Lie1, Lie2, Lie3, Hyd}.

    \begin{thm}
    \label{1FT} 
        $\hat{t}$ and $\hat{y}$ can be locally recovered by solving the differential equations
        \begin{align*}
            \frac{d\hat{t}}{d\varepsilon} &= \xi(\hat{t},\hat{y}) & \text{with initial condition } \hat{t}(t,y;0) = t, \\
            \frac{d\hat{y}}{d\varepsilon} &= \eta(\hat{t},\hat{y}) & \text{with initial condition } \hat{y}(t,y;0) = y.
        \end{align*}
    \end{thm}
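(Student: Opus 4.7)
The plan is to exploit the one-parameter group law \ref{analytic}, which says $\Gamma_{\varepsilon+h} = \Gamma_{h} \circ \Gamma_{\varepsilon}$ for sufficiently small $\varepsilon, h$. Writing this out coordinate-wise gives
\begin{align*}
    \hat{t}(t,y;\varepsilon+h) &= \hat{t}\bigl(\hat{t}(t,y;\varepsilon),\,\hat{y}(t,y;\varepsilon);\,h\bigr), \\
    \hat{y}(t,y;\varepsilon+h) &= \hat{y}\bigl(\hat{t}(t,y;\varepsilon),\,\hat{y}(t,y;\varepsilon);\,h\bigr).
\end{align*}
The idea is then to differentiate these relations with respect to $h$ at $h = 0$: the left-hand sides produce $\partial \hat{t}/\partial \varepsilon$ and $\partial \hat{y}/\partial \varepsilon$, while the right-hand sides, by the very definition of the infinitesimals $\xi, \eta$ as the linear coefficients in the expansion of $\hat{t}, \hat{y}$ about $\varepsilon = 0$, produce $\xi(\hat{t},\hat{y})$ and $\eta(\hat{t},\hat{y})$ respectively.

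More concretely, I would first fix $(t,y)$ and a value of $\varepsilon$ in the neighbourhood $\mathcal{U}$ of $0$ granted by \ref{analytic}, then Taylor-expand the right-hand sides in $h$:
\begin{align*}
    \hat{t}\bigl(\hat{t}(t,y;\varepsilon),\hat{y}(t,y;\varepsilon);h\bigr) &= \hat{t}(t,y;\varepsilon) + h\,\xi\bigl(\hat{t}(t,y;\varepsilon),\hat{y}(t,y;\varepsilon)\bigr) + \mathcal{O}(h^2),
\end{align*}
and analogously for $\hat{y}$. Subtracting $\hat{t}(t,y;\varepsilon)$ from both sides, dividing by $h$, and letting $h \to 0$ yields the ODEs
\[
    \frac{\partial \hat{t}}{\partial \varepsilon} = \xi(\hat{t},\hat{y}), \qquad \frac{\partial \hat{y}}{\partial \varepsilon} = \eta(\hat{t},\hat{y}).
\]
The initial conditions $\hat{t}(t,y;0) = t$ and $\hat{y}(t,y;0) = y$ are immediate from $\Gamma_0 = \mathrm{id}$. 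Local existence and uniqueness of the solutions to this autonomous system follow from the Picard--Lindelöf theorem, since analyticity of $\xi,\eta$ implies the required Lipschitz regularity on any compact subset of $E$.

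The main technical obstacle is justifying that the Taylor expansion used above is uniform enough in $(t,y,\varepsilon)$ for the limit $h \to 0$ to produce a genuine ODE rather than a pointwise relation. This is where the analyticity hypothesis in \ref{analytic} is essential: it guarantees that the error term $\mathcal{O}(h^2)$ is controlled on a neighbourhood of each $(t,y,\varepsilon)$, so that $\hat{t}, \hat{y}$ are in fact $C^1$ in $\varepsilon$ with continuous derivatives $\xi(\hat{t},\hat{y})$, $\eta(\hat{t},\hat{y})$. A secondary subtlety is that the group law only holds locally, so the ODEs recover $\Gamma_\varepsilon$ only for $\varepsilon$ in a (possibly smaller) neighbourhood of $0$; this is consistent with the ``locally'' qualifier in the statement.
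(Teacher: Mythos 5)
The paper does not prove this theorem---it is stated as a classical result of Lie and deferred to the cited literature (\cite{Lie1,Lie2,Lie3,Hyd}). Your argument is the standard proof of Lie's first fundamental theorem and is correct: differentiating the group law $\Gamma_{\varepsilon+h}=\Gamma_h\Gamma_\varepsilon$ in $h$ at $h=0$ reduces everything to the defining expansion of the infinitesimals at the identity, and you rightly note that Picard--Lindel\"of uniqueness is what lets one say the initial value problem \emph{recovers} $(\hat{t},\hat{y})$ rather than merely being satisfied by them.
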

    
    \paragraph{}
    To find $\Gamma_\varepsilon$, it therefore remains to find expressions for $\xi$ and $\eta$, which we do by solving the \textit{linearised symmetry condition}, a PDE in $\xi(t,y)$ and $\eta(t,y)$ which, for first-order ODE models $\dot y = \omega(t,y)$, is given by the following equation \cite{Hyd}:
        \[
            \eta_t + (\eta_y - \xi_t)\omega - \xi_y \omega^2 = \xi \omega_t + \eta \omega_y. \tag{LSC}
        \]

    \begin{defn}
        Typically, for higher order ODE models, the corresponding linearised symmetry condition is solved by noting both $\xi$ and $\eta$ are formally independent of $\dot{y}$ and higher derivatives of $y$, so comparing coefficients allows us to decompose the linearised symmetry condition into a system of PDEs known as the \textit{determining equations}.
    \end{defn}

    \begin{rem}
        It is worth noting that specific $\xi$ and $\eta$ that satisfy a given linearised symmetry condition can also be found using various ans{\"a}tze, and that a general solution to the determining equations is not always necessary to carry out the analysis that follows. However, as we will show later, more freedom in choosing parameters of the symmetries can lead to better differentiation between models.
    \end{rem}

    \paragraph{}
    We illustrate this by considering the example of the logistic growth model, whose symmetries we will later generalise to those of the Richards model.

    \begin{exam}
    \label{logistic symmetry}
        Consider the logistic growth model given by
        \begin{align*}
            \dot{y} &= \omega(t,y) = r_{\mathrm{L}}y\left(1-\frac{y}{K}\right), & \text{with solution } y(t) = \frac{K}{1+\left(\frac{K}{y_0} - 1\right)e^{-r_{\mathrm{L}}t}}, 
        \end{align*}
        where $r_{\mathrm{L}}$ is the intrinsic growth rate, $K$ the carrying capacity, and $y_0$ the initial population. Noting the system is autonomous, we try using the ansatz $\xi(t,y) = t$, so that the linearised symmetry condition (LSC) reads
        \[
            \eta_t + \eta_y\dot{y} = \left(r_{\mathrm{L}} - \frac{2r_{\mathrm{L}}y}{K}\right) \eta + \dot{y}.
        \]
        In this case, the solution can be found using the method of characteristics, and gives
        \[
            \eta(t,y) = r_{\mathrm{L}}y\left(1-\frac{y}{K}\right)t + {c_1} y^2 e^{-r_{\mathrm{L}}t},
        \]
        where ${c_1}$ is an arbitrary integration constant. We are left with the infinitesimal generator
        \[
            X = t\frac{\partial}{\partial t} + \left(r_{\mathrm{L}}y\left(1-\frac{y}{K}\right)t + {c_1} y^2 e^{-r_{\mathrm{L}}t}\right)\frac{\partial}{\partial y}.
        \]
        Applying Theorem \ref{1FT}, we can recover $\Gamma_\varepsilon$ by solving the ODEs
        \begin{align}
            \label{t^ L}\frac{d\hat{t}}{d\varepsilon} &= \hat{t} & \text{with } \hat{t}(t,y;0) = t, \\
            \label{y^ L} \frac{d\hat{y}}{d\varepsilon} &= r_{\mathrm{L}}\hat{y}\left(1-\frac{\hat{y}}{K}\right)\hat{t} + c_1\hat{y}^2e^{-r_{\mathrm{L}}\hat{t}} & \text{with } \hat{y}(t,y;0) = y.
        \end{align}
        Equation (\ref{t^ L}) has solution $\hat{t} = te^\varepsilon$, making equation (\ref{y^ L}) separable for $c_1=0$, which gives the solution $\hat{y} = K\left[1+\left(\frac{K}{y} - 1\right)\operatorname{exp}\{-r_{\mathrm{L}} t (e^{\varepsilon}-1)\}\right]^{-1}$. We hence obtain the \textit{logistic symmetry}, $\Gamma_\varepsilon^{(\mathrm{L})}$, given by
        \[
            \Gamma_\varepsilon^{(\mathrm{L})}(t,y) = \left(te^\varepsilon, K\left[1+\left(\frac{K}{y} - 1\right)\operatorname{exp}\{-r_{\mathrm{L}} t (e^{\varepsilon}-1)\}\right]^{-1}\right).
        \]
    \end{exam}

    \paragraph{}
    In fact, the idea used for the logistic growth model in Example \ref{logistic symmetry} of employing an ansatz of the form $\xi = A(t)$, for an arbitrary function $A$, to reduce the problem to solving a semi-linear PDE (using the method of characteristics) also works for any first-order autonomous ODE whose solution is known. In particular, every differential equation $\dot{y} = \omega(y)$ has a symmetry generated by $\xi = A(t)$, $\eta = A(t)\omega(y)$. Such a Lie symmetry is referred to as a \textit{trivial symmetry} (as it is inherent to any autonomous model, even if the associated transformation $\Gamma$ acts nontrivially on the total space).

\section{Models and Symmetries}
    \paragraph{}
    We start by giving a summary of the three models of interest, then derive some of the Lie symmetries used for model selection in Section \ref{results}.
\subsection{Models of Interest}
    \begin{model}[Logistic growth model]
        The first of our three models has already been discussed in Example \ref{logistic symmetry}, where we derived a family of transformations $\Gamma_\varepsilon^{(\mathrm{L})}$.
    \end{model}
    
    \begin{model}[Gompertz model]
        The second of our models has a few formulations, of which we present two. The \textit{autonomous} formulation is given by
        \[
            \dot{y} = r_{\mathrm{G}}y\operatorname{log}\left(\frac{K}{y}\right), \label{autonomous}\tag{A}
        \]
        while the \textit{classical} formulation is given by 
        \[
            \dot{y} = r_{\mathrm{G}}y\operatorname{log}\left(\frac{K}{y_0}\right)e^{-r_{\mathrm{G}}t}. \label{classical}\tag{C}
        \]
        It is important to note that fixing $y(0) = y_0$ in both formulations gives the same general solution
        \[
            y(t) = K \operatorname{exp}\left\{e^{-r_{\mathrm{G}}t} \operatorname{log}\left(\frac{y_0}{K} \right)\right\},
        \]
        and that we will primarily be working with the autonomous formulation, (\ref{autonomous}), although symmetries of the classical Gompertz model have also been considered in \cite{Aug}.
    \end{model}

    \begin{model}[Richards model]
        Our final model of interest is given by
        \begin{align*}
            \dot{y} &= r_{\mathrm{R}}y\left(1 - \left(\frac{y}{K}\right)^{\beta}\right), & \text{with solution } y(t) = \frac{K}{\left[1+\left(\frac{K^{\beta}}{y_0^{\beta}} - 1\right)e^{-\beta r_{\mathrm{R}}t}\right]^{\frac{1}{\beta}}}.
        \end{align*}
    \end{model}

    \begin{rem}
    \label{fitting}
        We explicitly differentiate between the three intrinsic growth rates $r_{\mathrm{L}}$, $r_{\mathrm{G}}$, and $r_{\mathrm{R}}$, while keeping the same notation for the carrying capacities $K$ and the initial populations $y_0$. The importance of distinguishing the growth rates becomes apparent when drawing conclusions from models fitted to data. For example, the paper by Simpson et. al. gives a comparison of $\frac{1}{r}$, representing the approximate \textit{regrowth timescales}, when fitting all three models to some data in \cite{Sim}. Using the Lady Musgrave Reef data from \cite{War}, and fitting with ordinary least squares (OLS), we see that the Gompertz, logistic and Richards models predict timescales of 712, 400 and 259 days, respectively. We note that all fitting in the paper is done via OLS, using \texttt{SciPy}'s \texttt{curve\_fit}.
    \end{rem}

\subsection{Symmetries of Considered Models}
    \paragraph{}
    We seek a Lie symmetry of the Gompertz model by employing the same strategy as Example \ref{logistic symmetry}, and use the generators $\xi = t$, $\eta = r_{\mathrm{G}}y\operatorname{log}\left(\frac{K}{y}\right)t$ again.
    
    Next, we find a group of symmetries $\Gamma_\varepsilon^{\mathrm{(G)}}(t,y) = (\hat{t},\hat{y})$ by considering 
    \begin{align}
        \label{t^ G}\frac{d\hat{t}}{d\varepsilon} &= \hat{t} & \text{with } \hat{t}(t,y;0) = t, \\
        \label{y^ G} \frac{d\hat{y}}{d\varepsilon} &= \hat{t}r_{\mathrm{G}}\hat{y}\operatorname{log}\left(\frac{K}{\hat{y}}\right) & \text{with } \hat{y}(t,y;0) = y.
    \end{align}
    
    We will refer to the solution $\left(\hat{t},\hat{y}\right)$ as the \textit{Gompertz symmetry}, $\Gamma_\varepsilon^{\mathrm{(G)}}$, given by
    \[
        \Gamma_\varepsilon^{\mathrm{(G)}}(t,y) = \left(te^\varepsilon, K\left(\frac{y}{K}\right)^{\operatorname{exp}\left(-r_{\mathrm{G}}t\left(e^\varepsilon-1\right)\right)}\right).
    \]
    
    \begin{exam}
    \label{Felix ansatz} 
        
        Many other symmetries of the autonomous Gompertz model (\ref{autonomous}) have been found in \cite{Aug} by considering ans{\"a}tze of the form
        \begin{align*}
            \xi(t,y) &= A(t) + B(t)\operatorname{log}\left(\frac{y}{K}\right), \\
            \eta(t,y) &= C(t) + D(t)y\operatorname{log}\left(\frac{y}{K}\right),
        \end{align*}
        and solving the determining equations. This yielded a Lie algebra of symmetries spanned by
        \begin{align*}
            X_1 &= e^{r_{\mathrm{G}}t}\operatorname{log}\left(\frac{y}{K}\right)\frac{\partial}{\partial t}, \\
            X_2 &= e^{-r_{\mathrm{G}}t}y\frac{\partial}{\partial y}, \\
            X_3 &= y\operatorname{log}\left(\frac{y}{K}\right)\frac{\partial}{\partial y}, \text{ and}\\
            X_4 &= A(t)\frac{\partial}{\partial t} + r_{\mathrm{G}}A(t)y\operatorname{log}\left(\frac{y}{K}\right)\frac{\partial}{\partial y},
        \end{align*}
        where $A$ is an arbitrary function of $t$.
    \end{exam}

    \paragraph{}
    The final symmetry we require before tackling the model selection problem is a symmetry of the Richards model. Since the Richards model also has an exact solution, we can repeat the argument from the logistic growth model to get
    \[
        X = t\frac{\partial}{\partial t} + \left(r_{\mathrm{R}}y\left(1-\left(\frac{y}{K}\right)^\beta\right)t + {c_1} y^{\beta+1} e^{-\beta r_{\mathrm{R}}t}\right)\frac{\partial}{\partial y}.
    \]
    We now solve the following to find a group of symmetries $\Gamma_\varepsilon^{(\mathrm{R})}(t,y) = (\hat{t},\hat{y})$ in the case where $c_1 = 0$:
    \begin{align}
        \label{t^ R}\frac{d\hat{t}}{d\varepsilon} &= \hat{t} & \text{with } \hat{t}(t,y;0) = t, \\
        \label{y^ R} \frac{d\hat{y}}{d\varepsilon} &= r_{\mathrm{R}}\hat{y}\left(1-\left(\frac{\hat{y}}{K}\right)^\beta\right)\hat{t} & \text{with } \hat{y}(t,y;0) = y.
    \end{align}
    Solving these, we obtain the \textit{Richards symmetry} as the solution, $\Gamma_\varepsilon^{\mathrm{(R)}}$, given by
    \[
        \Gamma_\varepsilon^{\mathrm{(R)}}(t,y) = \left(te^\varepsilon, K\left[1+\left(\left(\frac{K}{y}\right)^\beta - 1\right)\operatorname{exp}\{-\beta r_{\mathrm{R}} t (e^{\varepsilon}-1)\}\right]^{-\frac{1}{\beta}}\right).
    \] 
    \paragraph{}
    Other Lie symmetries can be found by means of other ans{\"a}tze, such as generalisations of the ones from Example \ref{Felix ansatz}, or generalisations of the ones proposed by Cheb-Terrab--Kolokolnikov in \cite{CTK}. When working with higher-order ODEs, or with systems of ODEs, it is possible to obtain an overdetermined set of determining equations, giving a complete set of infinitesimal generators for the Lie algebra of symmetries and avoiding the need for ans{\"a}tze altogether. Another method for finding symmetries is discussed in Appendix \ref{appendix}.

\section{Results}
\label{results}
\subsection{A Symmetry-based Framework for Model Selection}
    \paragraph{}
    We begin this section by justifying the symmetry-based framework for model selection, as explained by Ohlsson et al. in \cite{OBC}. We observe that, by construction, the solution space to a model is closed under the action of a Lie symmetry of that model, but not necessarily under other transformations (e.g. Lie symmetries of other models). We therefore consider the following algorithm, adapted from \cite{Bor}.

    \begin{alg}
    \label{algorithm}
        We initialise the model selection algorithm by inputting: time series data, $(t_i,y_i)_{i=1}^N$; and a Lie symmetry, $\Gamma_\varepsilon$, of a candidate differential equation model, with solution curve $m$.
        \begin{enumerate}
            \item Choose a sufficiently small\footnote{See Remark \ref{how small} and Question \ref{Q1}.} transformation parameter, $\varepsilon$, and transform the data via $\Gamma_\varepsilon$ to obtain transformed time series data $(\hat{t}_i,\hat{y}_i)_{i=1}^N = \Gamma_\varepsilon(t_i,y_i)_{i=1}^N$.
            \item Fit\footnote{As mentioned in Remark \ref{fitting}, we fit using OLS.} a curve $\hat{m}(\hat{t})$ of the candidate model to the transformed data.
            \item Using $\Gamma_{-\varepsilon}$, inversely transform $\hat{m}$ to obtain a new curve $(t,\widecheck{m}(t)) = \Gamma_{-\varepsilon}(\hat{t},\hat{m}(\hat{t}))$.
        \end{enumerate}
        The algorithm outputs a curve $\widecheck{m}(t;\varepsilon)$, which we will analyse to either support or refute whether the candidate model is (structurally) a good fit to the data.
    \end{alg}

    \paragraph{}
    \label{low noise}
    As we mentioned above, a Lie symmetry, $\Gamma_\varepsilon$, of a model with solution curve $m$ will map solutions to solutions. Hence, given generated time series data $(t_i, m(t_i)+\nu_i)$ for some noise $\nu_i$ drawn from a distribution of mean $0$ and variance ${\sigma_{\nu}}^2$, we expect $\widecheck{m}(t;\varepsilon)$ to be an excellent fit to the data in a neighbourhood of $\sigma_\nu = 0$.
    We will refer to this neighbourhood of $\sigma_\nu=0$ as the \textit{low-noise regime}.

    In light of this, we present some preliminary results, using simulated data, that exhibit the success of the trivial symmetry in distinguishing the correct underlying models in the low-noise regime. We then apply the same framework to coral reef data from the Great Barrier Reef.

    \begin{rem}
    \label{how small}
        As we see in Question \ref{Q1} addressing the neighbourhood of $\varepsilon=0$, and Question \ref{Q2} addressing the neighbourhood of $\sigma_\nu=0$, the question of ``how small is \textit{sufficiently small}?" remains open for both. In this paper, we chose $\sigma_\nu = 1$ and $\varepsilon$ to be some $0 < \varepsilon < 1$ where the quality of fit (as measured by the coefficient of determination, $R^2$) of a correct underlying model retains a good fit after implementing Algorithm \ref{algorithm} for simulated data. This motivates generating simulated data that is faithful to the real data considered, as this $\varepsilon$ can then be reused as an upper bound for the transformation parameter when using real data.
    \end{rem}
    
\subsection{Generating Convincing Simulated Data}
    \paragraph{}
    The \textit{reef dataset} is a rich time series dataset of 120 coral reef sites from around the Great Barrier Reef collected by the Australian Institute of Marine Science (AIMS), a processed version of which is available due to \cite{War}. To simulate realistic data, we analyse the processed data to find some summary statistics. The data of interest are:
    \begin{enumerate}
        \item \texttt{Date}, the dates of visits to the reef site,
        \item \texttt{HC}, hard coral cover, which is an aggregate statistic representing the populations of hard corals (various \textit{Acroporidae}, non-\textit{Acroporidae} hard coral, \textit{Fungiidae} and solitary hard coral),
        \item \texttt{HC\_sd}, the (sample, as each \texttt{HC} observation is an average of five observations from different transects) standard deviation of each \texttt{HC} observation.
    \end{enumerate}

    The first step in generating convincing data is obtaining a distribution of the times between observations. The Lady Musgrave Reef data analysed in \cite{Sim} and in \ref{concept} was recorded between 1992 and 2003. The frequency of sampling at all sites before 2006 was roughly annual, with sample mean $\mu = 366.5$ days and sample variance $\sigma^2 = 57.12^2$, which is consistent with the sampling frequency guidelines published by AIMS at the time. Note the guidelines were changed to recording every other year, after 2006.

    While this may be sufficient to simulate visually convincing coral cover data, the reef dataset also includes \texttt{HC\_sd}, which helps optimise the fitting by quantifying the uncertainty in the data. To simulate this uncertainty, we therefore consider the standard deviation \texttt{HC\_sd} against the coral cover \texttt{HC}, and observe significant heteroscedasticity, so we fit an order 2 polynomial to the standard deviation; the residual plot below reveals this to be reasonable.

    \begin{figure}[ht]
        \begin{subfigure}{0.48\textwidth}
            \includegraphics[width=\linewidth]{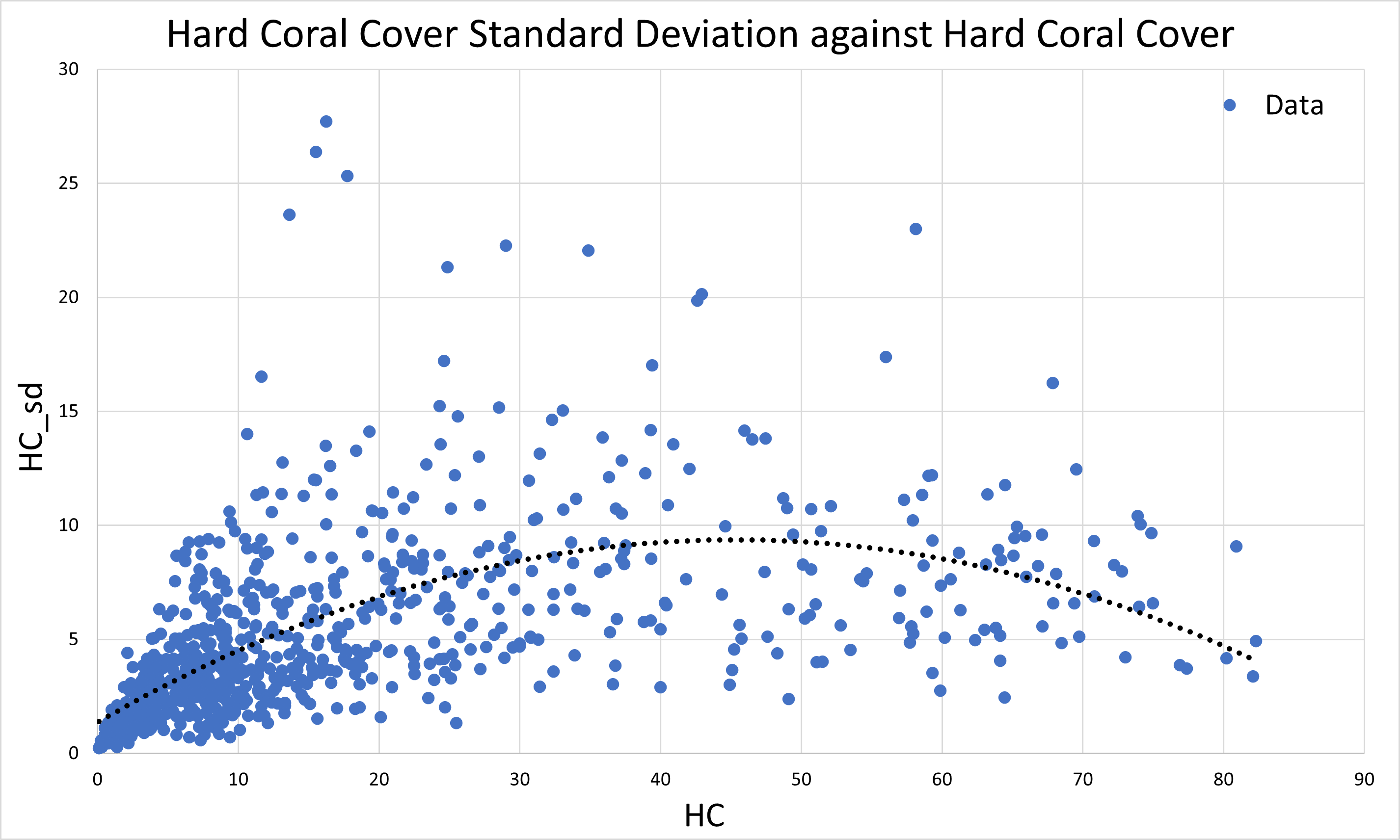}
            \caption{Heteroscedasticity in \texttt{HC\_sd}} \label{fig:1a}
        \end{subfigure}\hspace*{\fill}
        \begin{subfigure}{0.48\textwidth}
            \includegraphics[width=\linewidth]{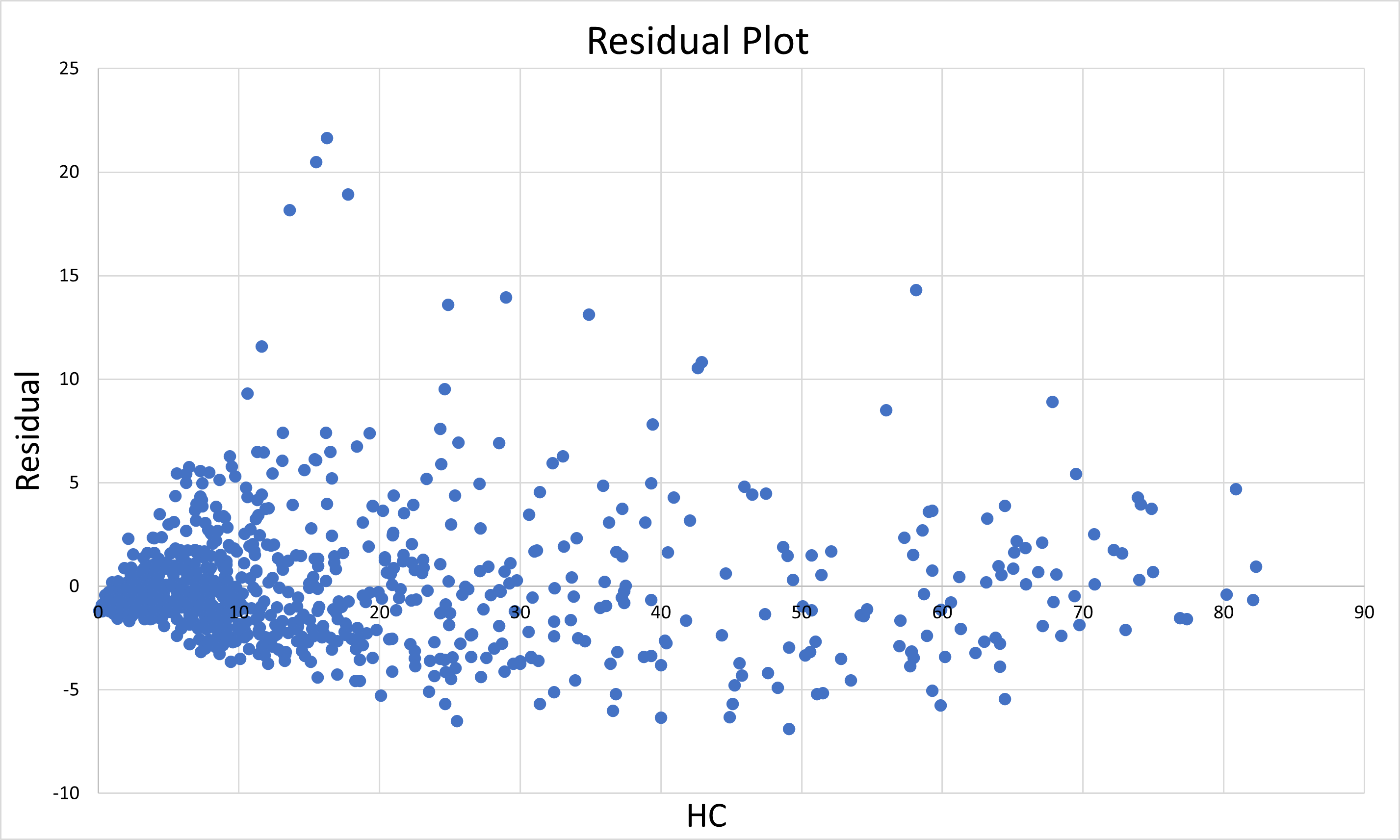}
            \caption{Residual Plot} \label{fig:1b}
        \end{subfigure}
        \caption{Coral Cover Standard Deviation (\texttt{HC\_sd}) as Coral Cover (\texttt{HC}) Varies}
    \end{figure}

    Finally, we must decide on parameters $r, K, y_0$ and $\beta$ with which to generate data. We fix $\beta=\frac{1}{2}$, as $\hat\beta=\frac{1}{2}$ is the mode value of $\hat\beta$ to the nearest half-integer when fitting the Richards model (with $\beta$ free) to all sites of the reef dataset. $\beta = \frac{1}{2}$ also interpolates between the logistic model (where $\beta = 1$) and the Gompertz model (which is the limit of the Richards model as $\beta r_\mathrm{R} \to r_\mathrm{G}$ and $\beta \to 0$). We pick the other parameters as the means of the fitted parameters across three considered datasets to allow a direct comparison. The three chosen reefs (Lady Musgrave Reef 1992-2003, One Tree Reef 1992-2003, Broomfield Reef 2008-2018) were similar in terms of number of samples and sampling frequency. Furthermore, these parameter values are used as priors for fitting.

\subsection{Proof of Concept -- the Trivial Symmetry Solves the Model Selection Problem for Simulated Data}
\label{proof of concept}
    \paragraph{} 
    We present some figures of the fitted curves $\hat{m}$, which are referred to in the figures as the \textit{old fits} (dashed lines), and $\widecheck{m}$, which are referred to as the \textit{transformed fits} (solid lines). The data below is simulated from the three models of interest with added Gaussian noise (with mean 0 and standard deviation 1).

    \begin{figure}[ht]
        \begin{subfigure}{0.48\textwidth}
            \includegraphics[width=\linewidth]{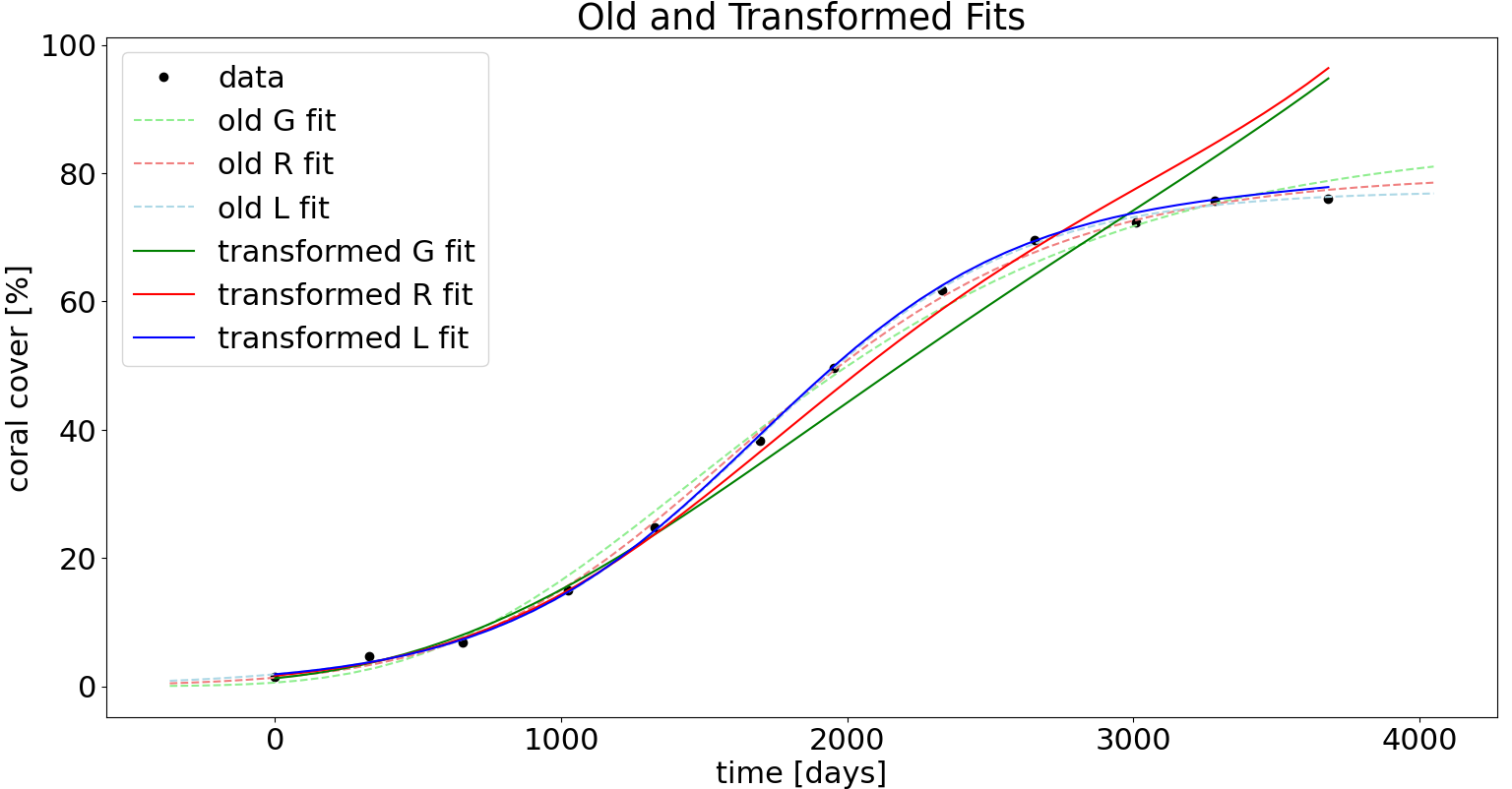}
            \caption{Simulated Logistic Data} \label{fig:2a}
        \end{subfigure}\hspace*{\fill}
        \begin{subfigure}{0.48\textwidth}
            \includegraphics[width=\linewidth]{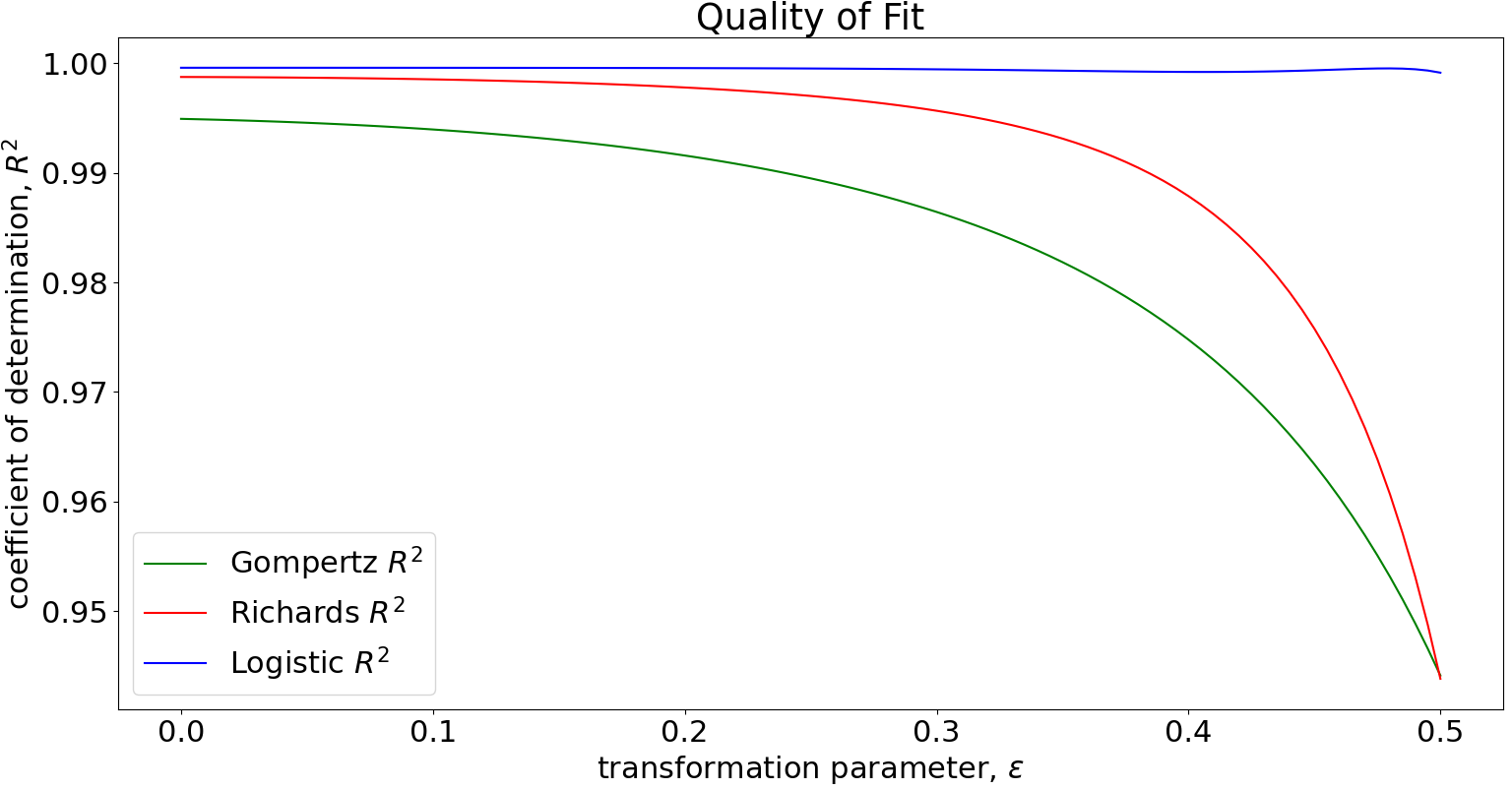}
            \caption{Quality of Fit} \label{fig:2b}
        \end{subfigure}
    \end{figure}
    \begin{figure}[ht]\ContinuedFloat
        \begin{subfigure}{0.48\textwidth}
            \includegraphics[width=\linewidth]{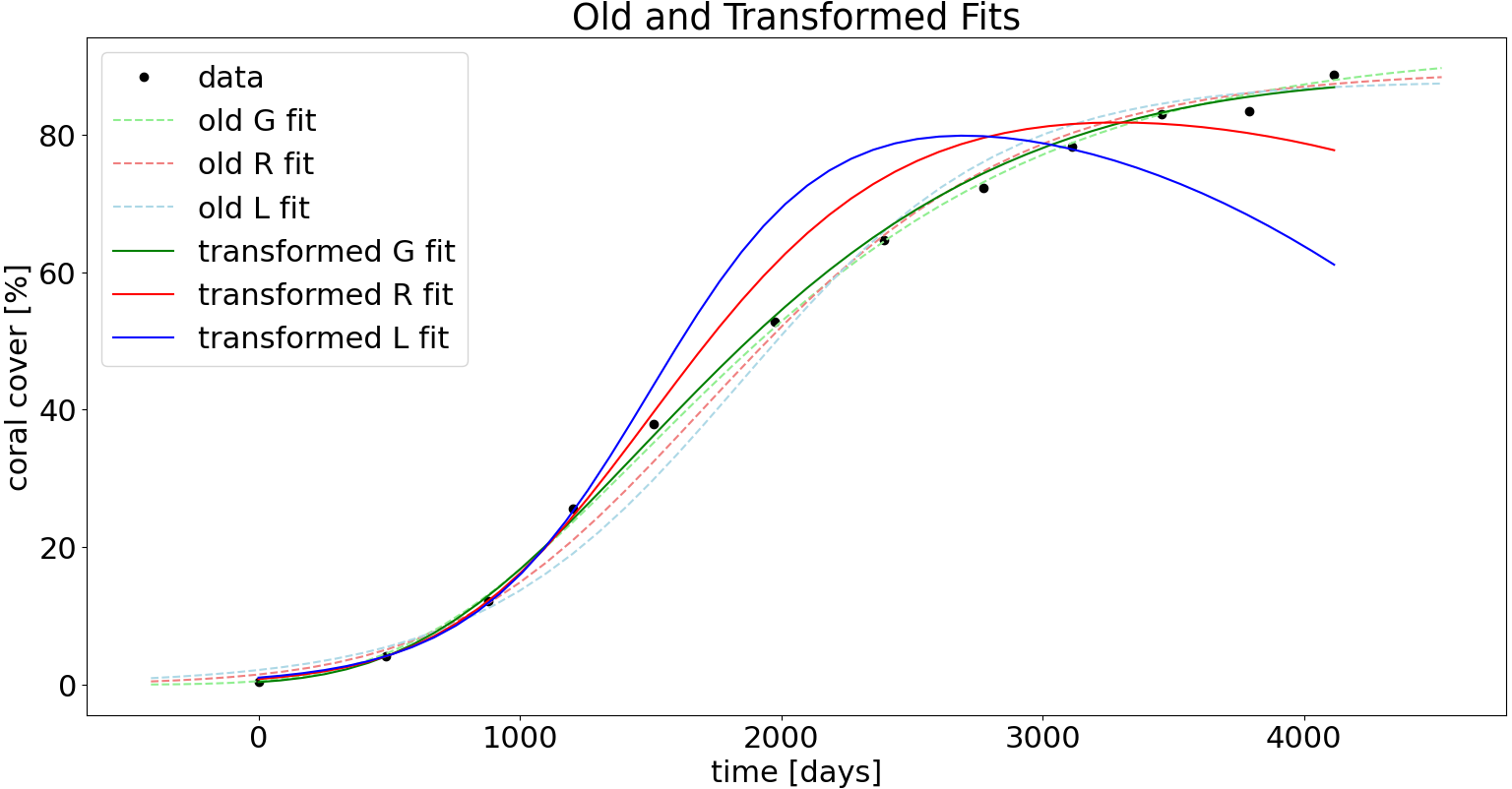}
            \caption{Simulated Gompertz Data} \label{fig:2c}
        \end{subfigure}\hspace*{\fill}
        \begin{subfigure}{0.48\textwidth}
            \includegraphics[width=\linewidth]{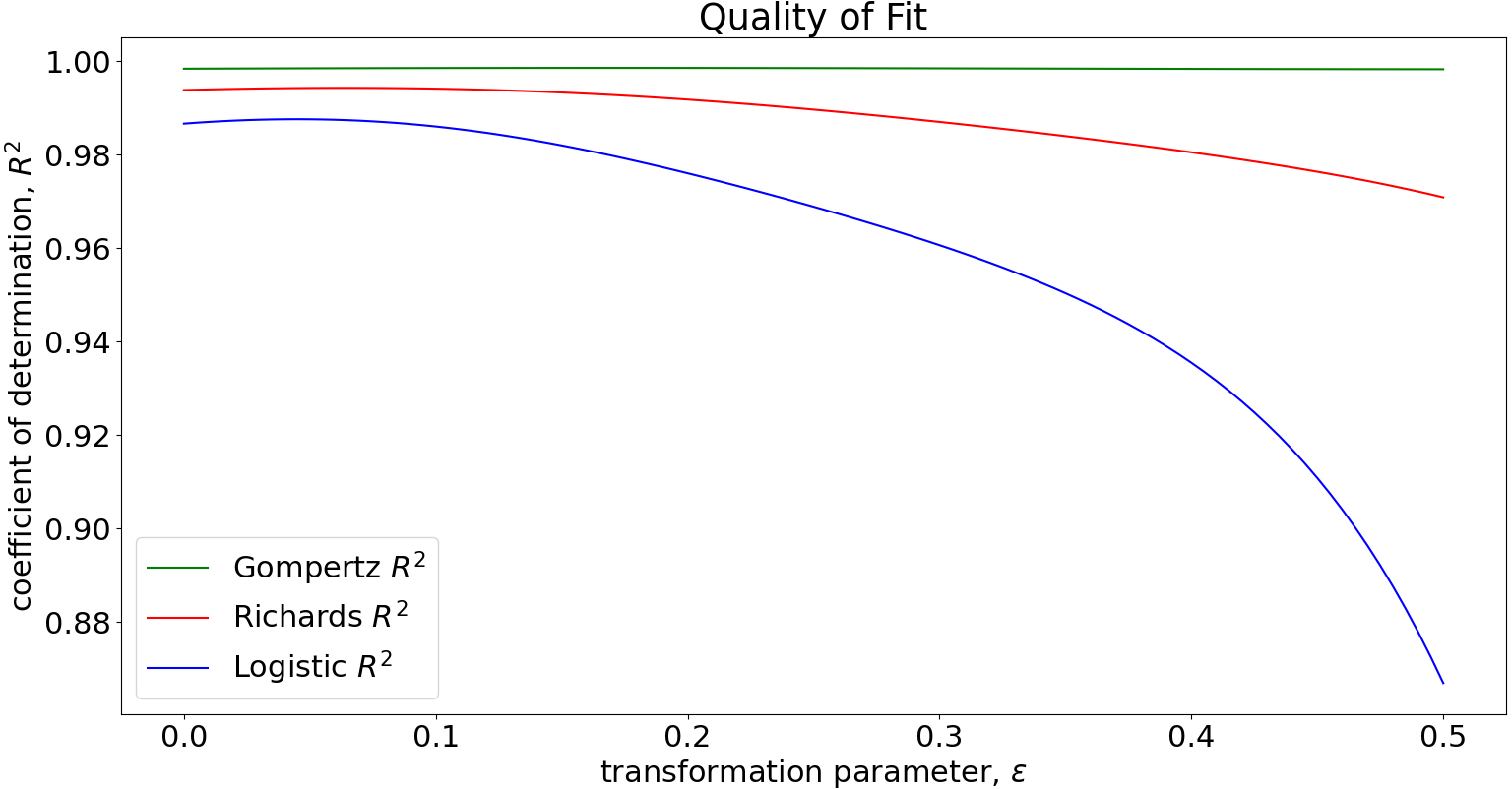}
            \caption{Quality of Fit} \label{fig:2d}
        \end{subfigure}
    \end{figure}
    \begin{figure}[ht]\ContinuedFloat
        \begin{subfigure}{0.48\textwidth}
            \includegraphics[width=\linewidth]{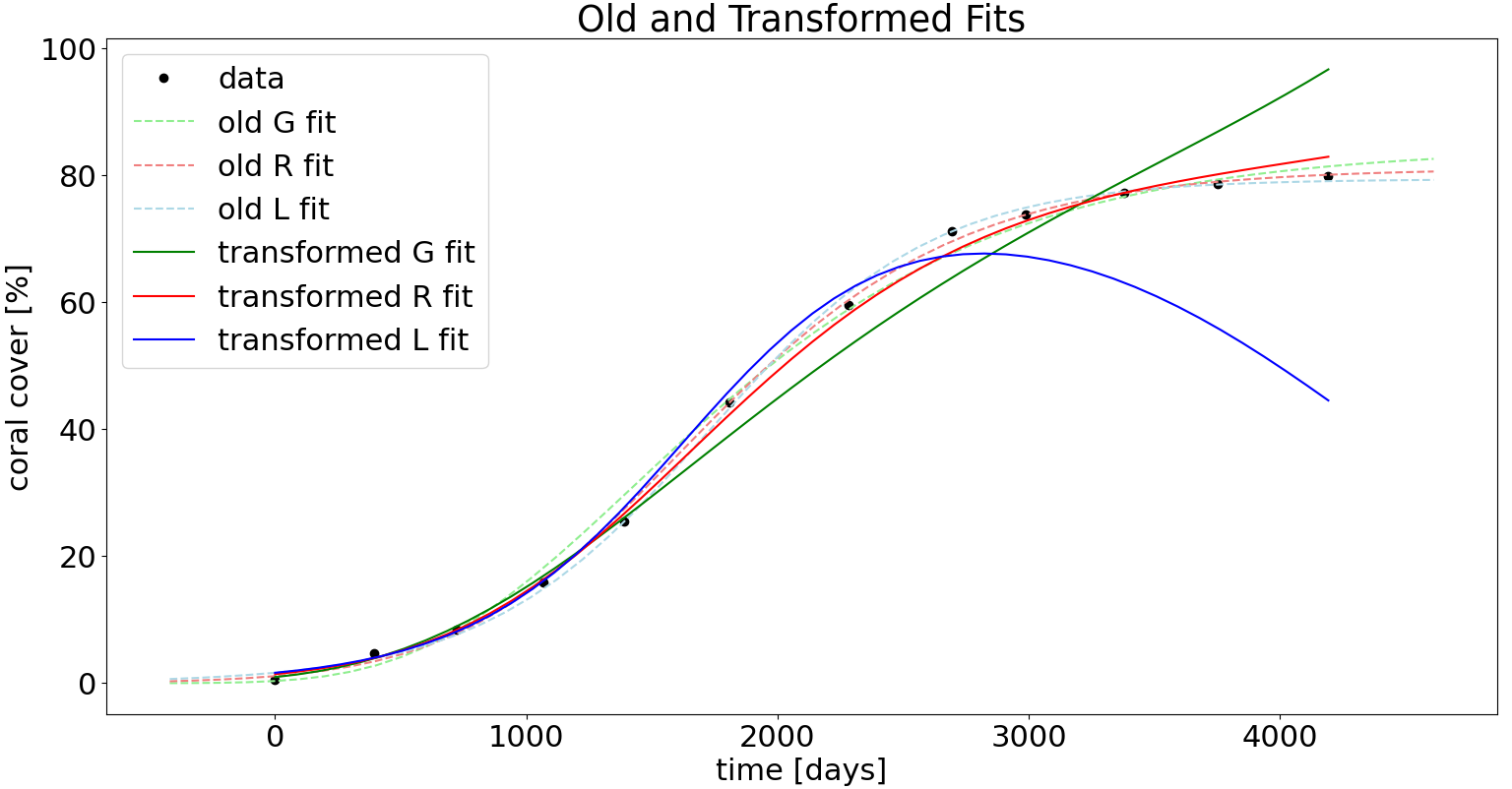}
            \caption{Simulated Richards Data} \label{fig:2e}
        \end{subfigure}\hspace*{\fill}
        \begin{subfigure}{0.48\textwidth}
            \includegraphics[width=\linewidth]{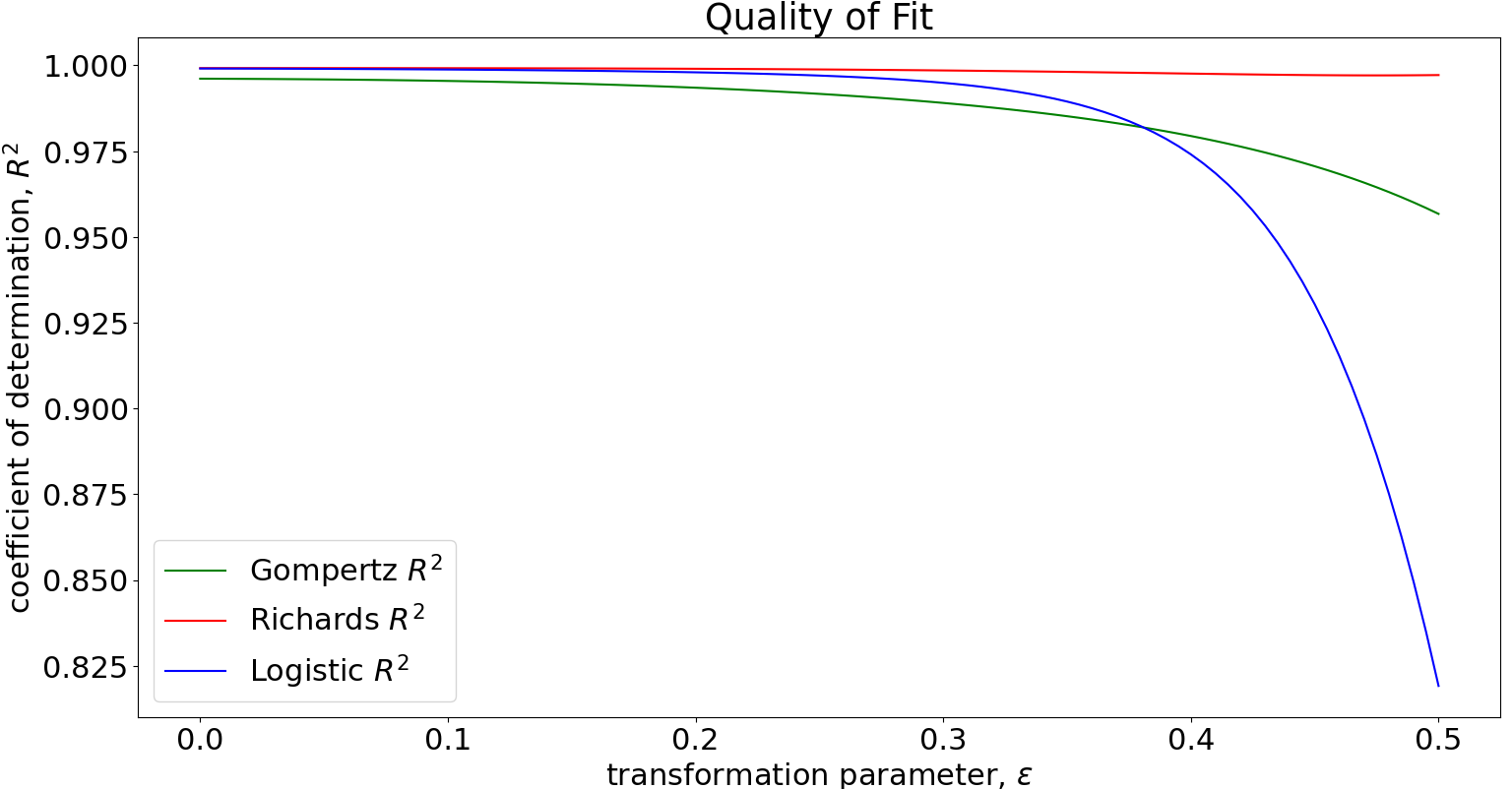}
            \caption{Quality of Fit} \label{fig:2f}
        \end{subfigure}
        \caption{Simulated Data and Quality of Fit of Fitted Models}\label{fig:2}
    \end{figure}

    As shown by the figures above, these preliminary results suggest that our framework for model selection is capable of correctly identifying the correct underlying model, and that quality of fit decreases as a function of the transformation parameter, $\varepsilon$. It is expected that the underlying model's quality of fit should also decrease as the noise in the simulated data increases. 

\subsection{Concept -- Tackling the Model Selection Problem for Real Data}
\label{concept}
    \paragraph{}
    In light of the proof of concept in \ref{proof of concept}, we now apply the same transformations to data from the Lady Musgrave Island from the Great Barrier Reef.
    
    \begin{figure}[ht]
        \begin{subfigure}{0.48\textwidth}
            \includegraphics[width=\linewidth]{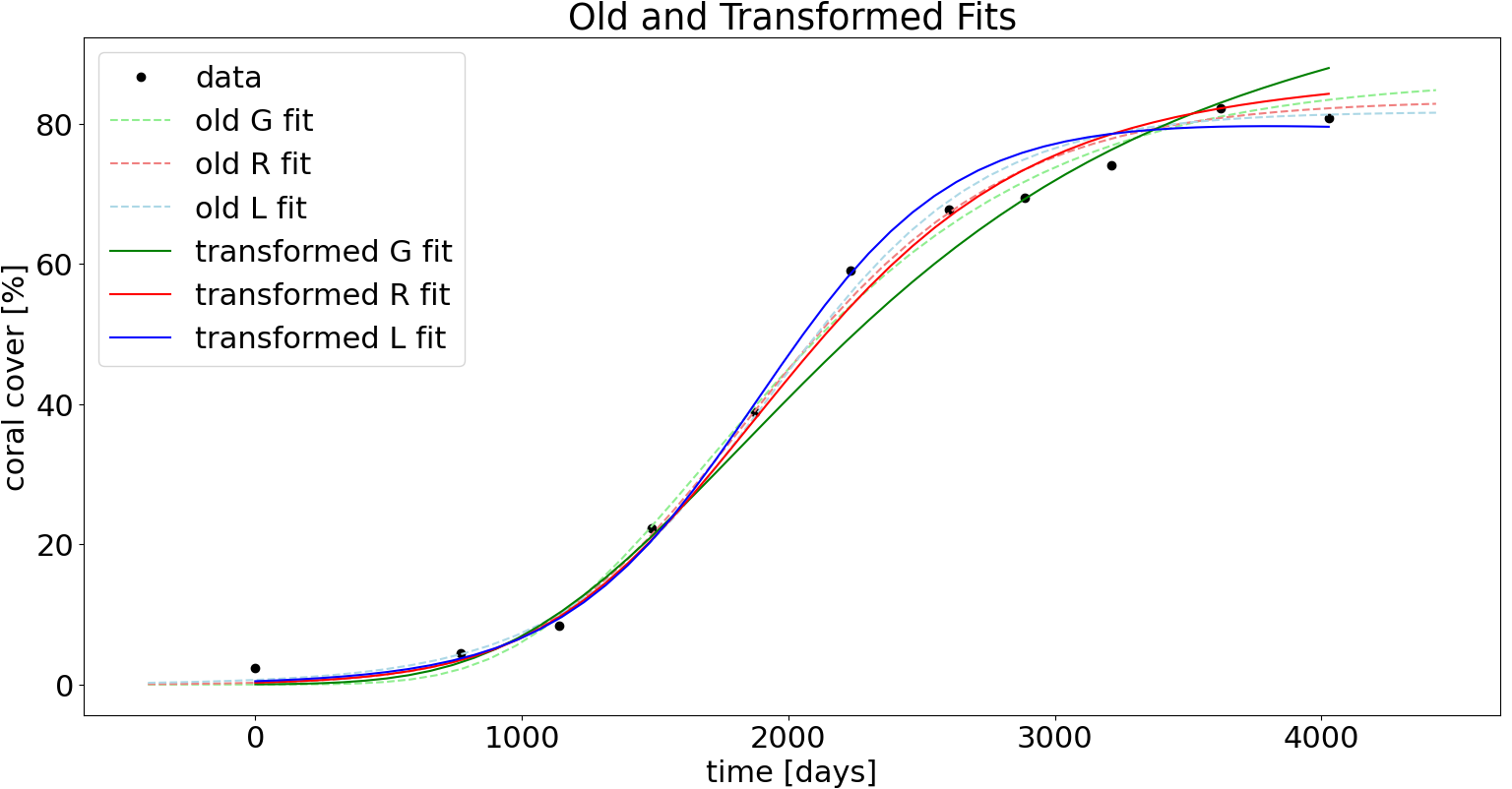}
            \caption{Lady Musgrave Reef Data} \label{fig:3a}
        \end{subfigure}\hspace*{\fill}
        \begin{subfigure}{0.48\textwidth}
            \includegraphics[width=\linewidth]{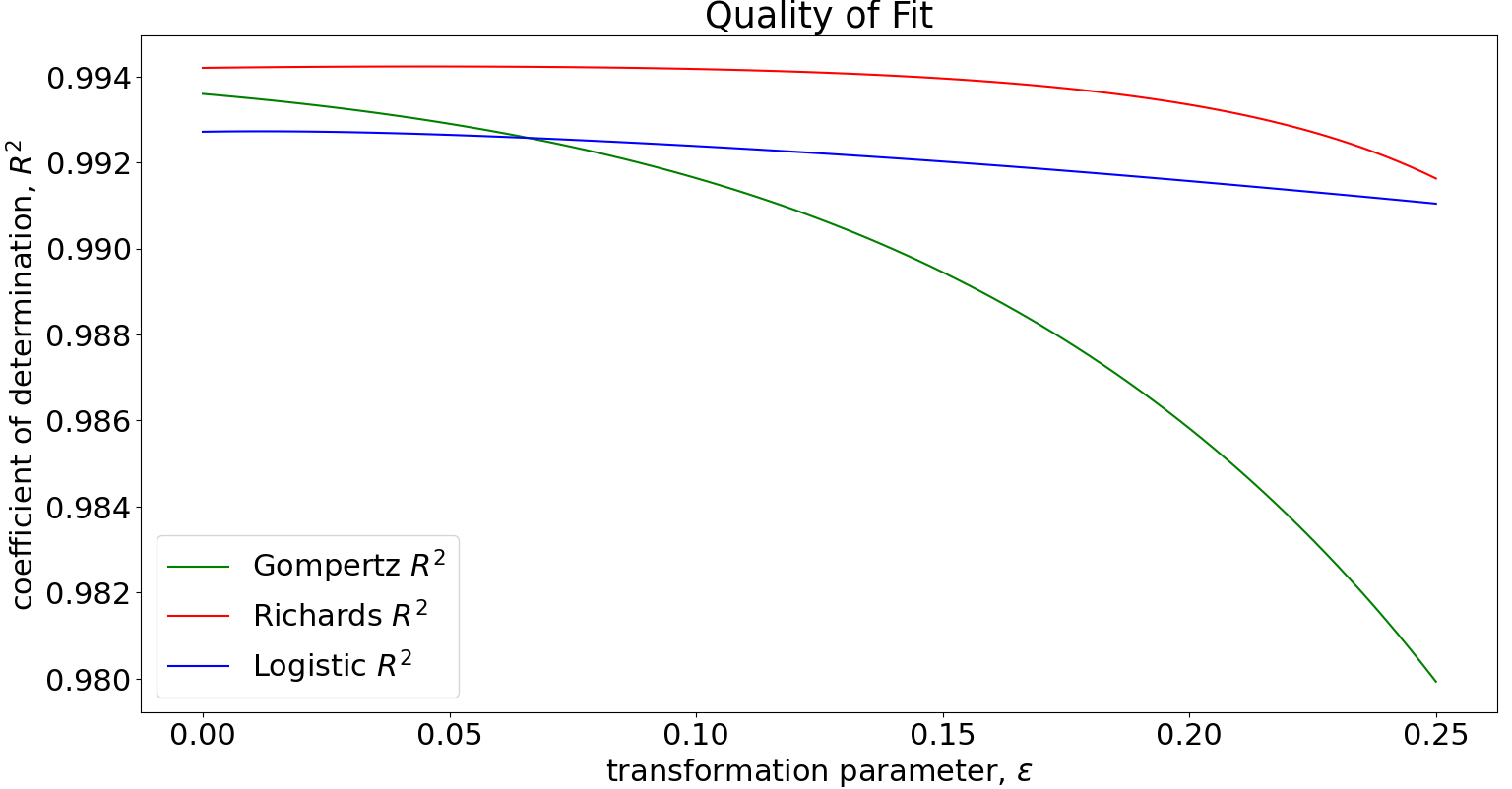}
            \caption{Quality of Fit}
            \label{fig:3b}
        \end{subfigure}
        \caption{The Framework Applied to Real Data Favours the Logistic and Richards}
        \label{fig:3}
    \end{figure}

    The figures above support the use of the logistic or Richards models over the Gompertz model: the deterioration in quality of fit (as $\varepsilon$ increases) is highest for the Gompertz model out of the three candidates. At the very least, this evidence suggests that a careful evaluation of model validity is necessary before using any particular model to draw conclusions from data. This is especially relevant in coral reef modelling given the wide usage of the Gompertz model \cite{Osb,Ive,Tho}.

\subsection{Determining Optimal Parameters for Model Selection -- The Disagreement Coefficient}
\label{Delta subsection}

    \paragraph{}
    We begin this subsection by remarking that, so far, all the transformations $\Gamma$ we have used have had one parameter, $\varepsilon$. We are also interested in cases where the symmetries have multiple parameters. For example, since the set of infinitesimal generators form a vector space, we can consider $X = c_1X_1 + \dots + c_nX_n$ for a linearly independent set of generators $\{X_1, \dots, X_n \}$ to obtain a transformation that depends on a vector of parameters $\mathbf{c} = \left(c_1, \dots, c_n\right)$. To study the dependence of the quality of fit on the parameters of the transformation, we introduce the \textit{disagreement coefficient}.

    To do so, we first choose symmetries of the models of interest that depend on a transformation parameter, $c$. Using the ansatz $\xi = 1$, we obtain the following symmetries by solving for $\eta$ using the method of characteristics, and using Theorem \ref{1FT} to find 
    \begin{align*}
        \Pi_{\varepsilon}^{(\mathrm{L})}(t,y;c) &= \left(t+\varepsilon, \left[\frac{1}{y} + \frac{c}{r_{\mathrm{L}}}e^{-r_{\mathrm{L}}t}\left(e^{-r_{\mathrm{L}}\varepsilon} - 1\right)\right]^{-1}\right), \\
        \Pi_{\varepsilon}^{(\mathrm{G})}(t,y;c) &= \left(t+\varepsilon, y\operatorname{exp}\left\{\frac{c}{r_{\mathrm{G}}}e^{-r_{\mathrm{G}}t}\left(1 - e^{-r_{\mathrm{G}}\varepsilon}\right)\right\}\right), \text{ and}\\
        \Pi_{\varepsilon}^{(\mathrm{R})}(t,y;c) &= \left(t+\varepsilon, \left[y^{-\beta} + \frac{c}{r_{\mathrm{R}}}e^{-\beta r_{\mathrm{R}}t}\left(e^{-\beta r_{\mathrm{R}}\varepsilon} - 1\right)\right]^{-\frac{1}{\beta}}\right).
    \end{align*}

    \paragraph{}
    We now seek optimal values of $c$ by maximising the \textit{disagreement} between two models. To define this properly, we need some notation.

    \begin{nota}
        Henceforth when considering a model, A, given by $\dot{y}=\omega(t,y)$, we will use a lowercase letter $a = a(t)$ to denote the general solution, and $\hat{a} = \hat{a}(t;d)$ to denote a fitted (via OLS) solution given some data $d = (t_i,y_i)_{i=1}^N$. This is the the ``old fit" in Figures \ref{fig:2} and \ref{fig:3}.

        Unravelling Algorithm \ref{algorithm}, we can \textit{define} the inverse-transformed fit as $\widecheck{a} := \Gamma_{-\varepsilon}\hat{a}\left(t;\Gamma_{\varepsilon}d\right)$. Similarly, this is the the ``transformed fit" in Figures \ref{fig:2} and \ref{fig:3}. Note the slight abuse of notation of using $\Gamma_\varepsilon$ to refer to the second component of $\Gamma_\varepsilon$ rather than the full two-variable function.
    \end{nota}
    \begin{defn}
        We define the \textit{disagreement coefficient} between models $A$ and $B$ as $\Delta_\mathrm{A,B}(c,\varepsilon) = \lVert \widecheck{a} - \widecheck{b} \rVert$ as the $L^2[t_1,t_k]$ distance between $\widecheck{a}$ and $\widecheck{b}$. We use the disagreement coefficient to determine locally optimal values of $(\mathbf{c}, \varepsilon)$ to be used for efficient differentiation between models by finding local maxima of the $\Delta(\mathbf{c},\varepsilon)$ surface. Here, $\mathbf{c}$ refers to the vector of parameters of the symmetries involved.
    \end{defn}

        Intuitively, the disagreement coefficient $\Delta_\mathrm{A,B}(c,0) = \lVert \hat{a} - \hat{b} \rVert$ will be small when $\varepsilon=0$ assuming models $A$ and $B$ are both good fits, but can be used to reveal values of $\mathbf{c}$ where the differentiation is more apparent.
        
        Pictorially, the definition and notation above can be summarised by the figure below.
    \begin{figure}[ht]
    \[\begin{tikzcd}[column sep=0.8em, row sep = 1.em]
    	&&& d \\
    	& {\Gamma_\varepsilon^{\mathrm{A}}d} &&&& {\Gamma_\varepsilon^{\mathrm{B}}d} \\
    	\\
    	{\hat{a}} &&&&&& {\hat{b}} \\
    	\\
    	& {\widecheck{a}} &&&& {\widecheck{b}} \\
    	&&& {\Delta_\mathrm{A,B}}
    	\arrow["{\Gamma_\varepsilon^{\mathrm{B}}}"', from=1-4, to=2-6]
    	\arrow["{\Gamma_\varepsilon^{\mathrm{A}}}", from=1-4, to=2-2]
    	\arrow["{\text{fit b}}"', from=2-6, to=4-7]
    	\arrow["{\text{fit a}}", from=2-2, to=4-1]
    	\arrow["{\Gamma_{-\varepsilon}^{\mathrm{B}}}"', from=4-7, to=6-6]
    	\arrow["{\Gamma_{-\varepsilon}^{\mathrm{A}}}", from=4-1, to=6-2]
    	\arrow[dashed, from=6-6, to=7-4]
    	\arrow[dashed, from=6-2, to=7-4]
    \end{tikzcd}\]
    \caption{Pictorial Summary of Constructing the Disagreement Coefficient}
    \label{fig:4}
    \end{figure}
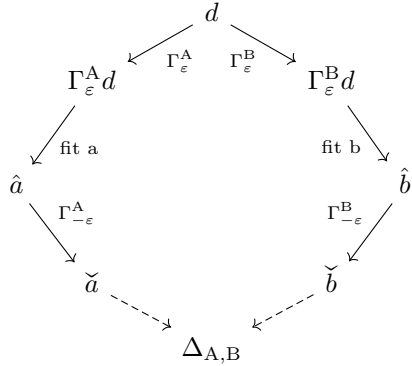

    \begin{rem}
        We note a few remarks about the disagreement coefficient. First, practical use of $\Delta$ is only valid away from discontinuities in the Lie symmetries that are used. For example, when computing the disagreement coefficient $\Delta_\mathrm{L,G}$ using the symmetries $\Pi_{\varepsilon}^{(\mathrm{L})}(t,y;c)$ and $\Pi_{\varepsilon}^{(\mathrm{G})}(t,y;c)$, we observe that $\Pi_{\varepsilon}^{(\mathrm{L})}(t,y;c)$ has a discontinuity when $ \left[\frac{1}{y} + \frac{c}{r_{\mathrm{L}}}e^{-r_{\mathrm{L}}t}\left(e^{-r_{\mathrm{L}}\varepsilon} - 1\right)\right] = 0$. The locus of such $(c,\varepsilon)$ is therefore a curve which must be avoided when carrying out a local search for optimal parameters.

        Second, we remark that the disagreement coefficient is calculated \textit{after} fitting model parameters. This suggests that simultaneously finding optimal model parameters and optimal symmetry parameters is a harder problem, but not one we attempt to answer.

        Third, we can naturally generalise the disagreement coefficient to accommodate models whose symmetries have different $\varepsilon$ and $\mathbf{c}$, although we will not do this in the example discussed.
    \end{rem}

    \paragraph{}
    We will consider an example of comparing the Gompertz model and the Richards model. We generate data by adding noise to the Richards model as before, and use $\Delta_\mathrm{G,R}$ to determine $c$ for which our framework produces clearer differentiation between the Gompertz and the Richards models.

    The $\Delta_\mathrm{G,R}$ plot in Figure \ref{fig:5a}, confirms that $\varepsilon = 0$ does not give any significant differentiation between the models, and neither does $c = 0$. However, what the figure does suggest is that increasing $c$ and $\varepsilon$ (in magnitude) results in better differentiation. Figure \ref{fig:5b} therefore shows the effect of the transformation at $\varepsilon = 0.1, c=10$.

    \begin{figure}[ht]
        \begin{subfigure}{0.48\textwidth}
            \includegraphics[width=\linewidth]{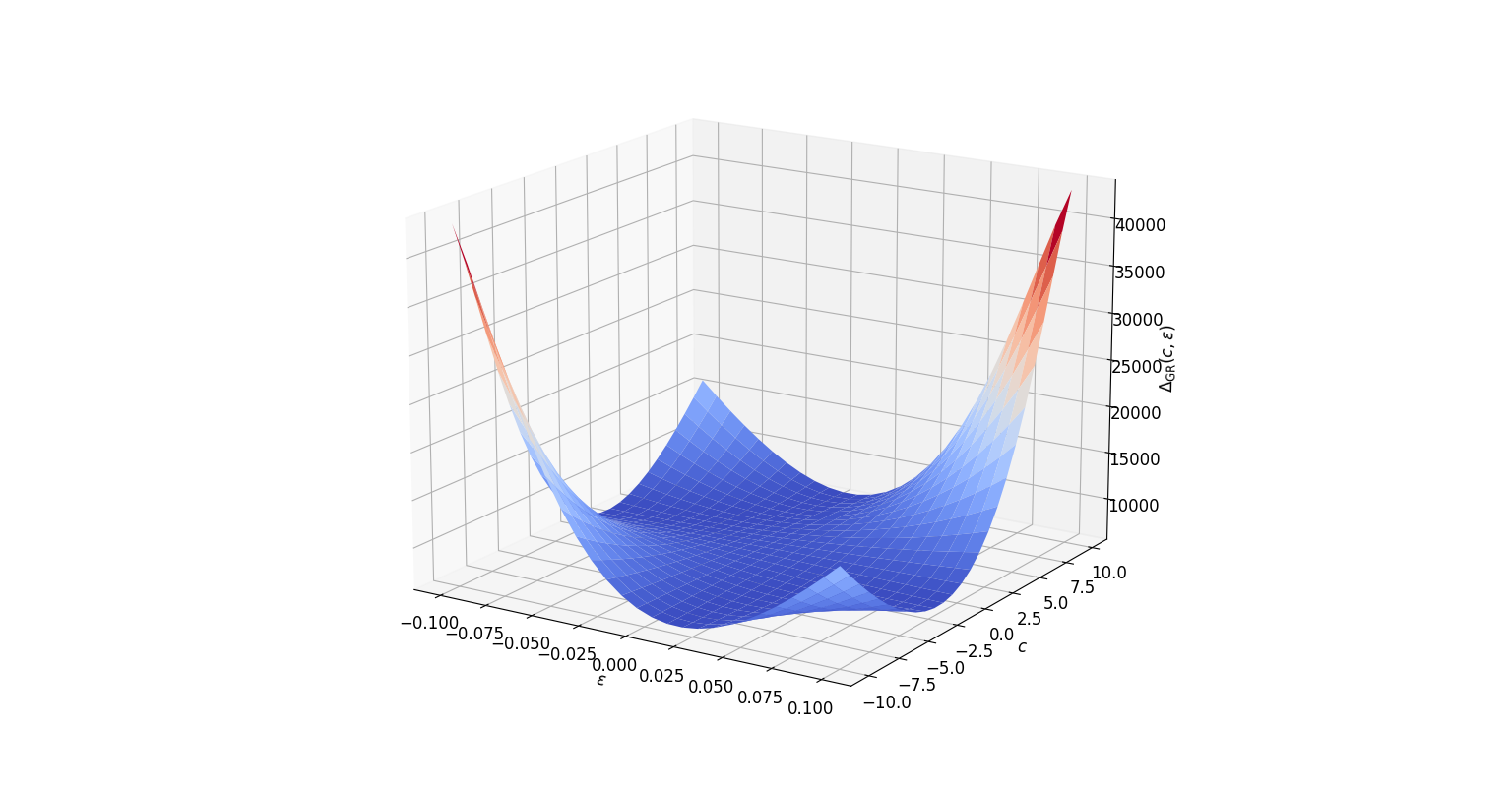}
            \caption{$\Delta_\mathrm{G,R}(c,\varepsilon)$} \label{fig:5a}
        \end{subfigure}
        \begin{subfigure}{0.48\textwidth}
            \includegraphics[width=\linewidth]{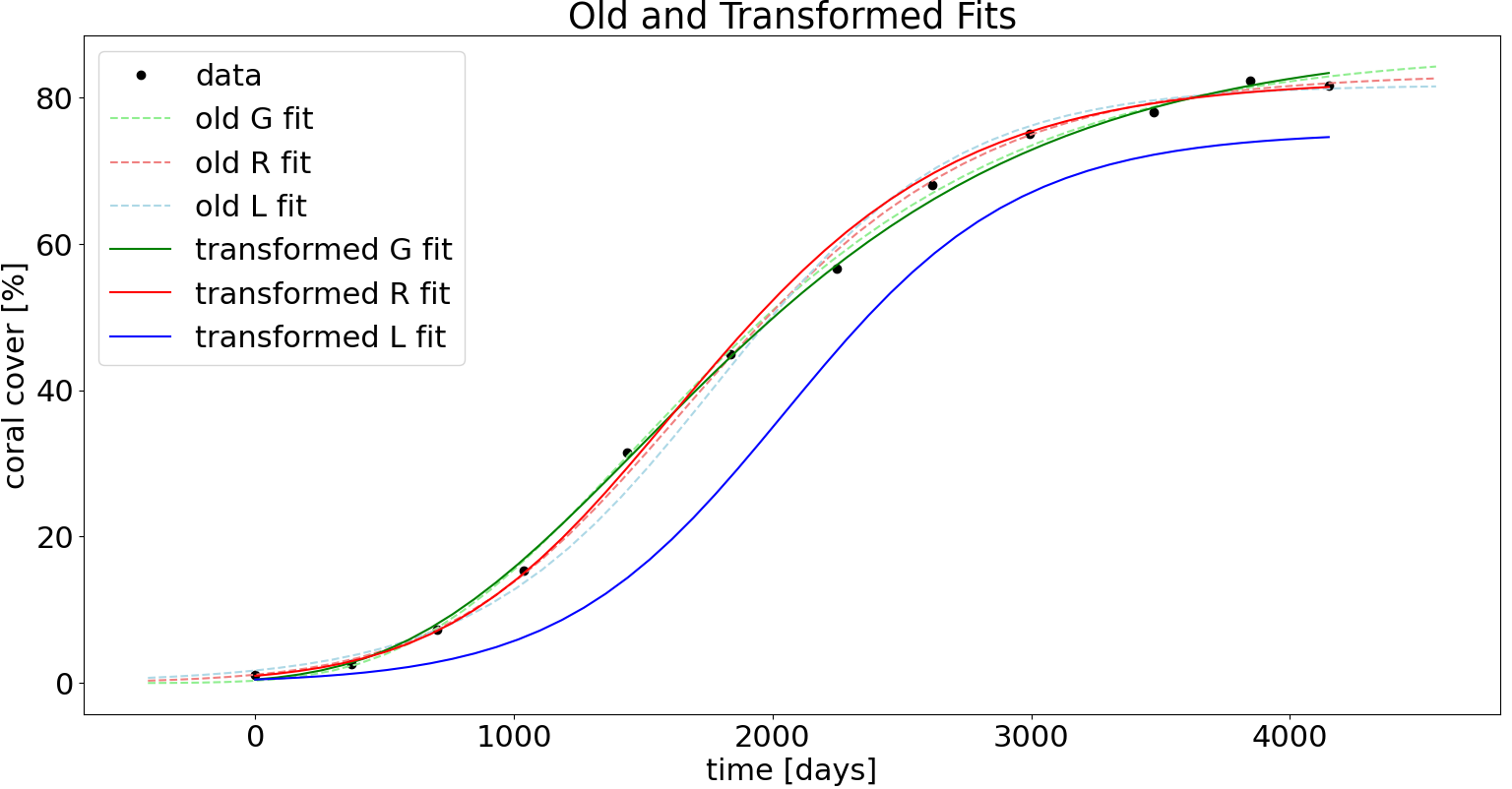}
            \caption{Simulated Richards Data} \label{fig:5b}
        \end{subfigure}\hspace*{\fill}
        \caption{Using the Disagreement Coefficient to Pick Parameters}
        \label{fig:5}
    \end{figure}

    \paragraph{}
    Recalling Definition \ref{symmetry definition}, all our analysis requires a neighbourhood  of $\varepsilon = 0$, $\mathcal{U} = (\varepsilon_{\mathrm{min}},\varepsilon_{\mathrm{max}})$, where the Lie symmetries we employ are analytic. However, there is no known practical method of determining these extremal values of $\varepsilon$. In our example (and in any example with an absence of a clear local maximum where the analysis is valid), the analysis above simply suggests $\varepsilon$ should be set as close to $ \varepsilon_{\mathrm{max}}$ (or $\varepsilon_{\mathrm{min}}$) as possible, while ensuring that the Lie symmetries used are continuous for all intermediate $s \in [0,\varepsilon]$ (respectively, $[\varepsilon,0]$), i.e. we choose
    \[
        \varepsilon = \argmax_{s \in [\varepsilon_{\mathrm{min}},\varepsilon_{\mathrm{max}}] \cap \mathcal{C}} \Delta_\mathrm{G,R}(\mathbf{c},s),
    \]
    defining $\mathcal{C} = \mathcal{C^+} \cup \mathcal{C^-}$, where
    \begin{align*}
        \mathcal{C^+} &= \left\{\varepsilon>0 \colon \Pi_{s}^{(\mathrm{G})} \text{ and } \Pi_{s}^{(\mathrm{R})} \text{ are continuous } \forall s \in [0,\varepsilon] \right\}, \text{ and} \\
        \mathcal{C^-} &= \left\{\varepsilon<0 \colon \Pi_{s}^{(\mathrm{G})} \text{ and } \Pi_{s}^{(\mathrm{R})} \text{ are continuous } \forall s \in [\varepsilon,0] \right\}.
    \end{align*}
    
    We therefore pose the following question.
    \begin{ques}
        \label{Q1}
        Is there a theoretical way to determine $\varepsilon_\mathrm{min}$ and $\varepsilon_\mathrm{max}$? If not, can bounds for these be found computationally?
    \end{ques}

\section{Discussion}

\paragraph{}
To summarise, the paper tackles two open problems in the field. The first being the common obstruction of \textit{finding symmetries} for a given first order ODE model, as the set of determining equations is typically underdetermined. The application to population growth models demonstrates the capabilities of the trivial symmetry in our framework for model selection. Existence trivial symmetries for models formulated with PDEs or systems of ODEs suggests the ability to potentially avoid needing to find model-specific Lie symmetries beyond the trivial ones, even in more complicated models. In the setting where finding symmetries is possible, the second problem that is tackled is that of \textit{choosing parameters} when employing multi-parameter symmetries for model selection, which is done by numerical analysis of the disagreement coefficient. 

Before concluding, we suggest some avenues of further research.
\begin{enumerate}
    \item First, there is some analysis to be done to ensure the validity of using independent Gaussian noise when simulating the data, for example, by applying the methodology in \cite{Lam} to test for residual autocorrelation in the transformed data.
    \item Second, other fitting techniques could be implemented, especially to study model selection with incomplete datasets \cite{Hai}. 
    \item \label{unified-Richards} Third, the model selection can be carried out with a broader family of sigmoid curves, such as the \textit{unified-Richards model} from \cite{Tjø}, which generalises the models considered in this paper.
    \item Finally, other new symmetries of the models of interest can be found (e.g. by combining \ref{unified-Richards}. and Theorem \ref{LMST}) to allow application of the techniques from subsection \ref{Delta subsection}.
\end{enumerate}

\paragraph{}
In addition to Question \ref{Q1}, we conclude the paper with two other open questions. The first is inspired by the results of \cite{FFR}. It is hoped that by answering this question, we can determine, given noisy data, the suitability of the symmetry framework for model selection, since we expect our framework to fail at selecting the correct underlying model when the data is too noisy. 

\begin{ques}
    \label{Q2}
    How can bounds for the low-noise regime (c.f. \ref{low noise}) be theoretically or practically determined?
\end{ques}

Before posing the final question, we draw attention to \cite{Hyd} Chapter 7 where symmetries of systems of ODEs, and Chapter 8 where symmetries of PDEs are discussed. For the full details of the theory, \cite{Olv} Chapter 2 introduces most, if not all, of the framework necessary to generalise the techniques from this paper to systems of PDEs. Some work has also been done on developing the theory of symmetries for SDEs \cite{Una, Koz1, Koz2}.

By answering the following question, it is hoped that the powerful machinery of Lie symmetries could be used to demonstrate the potential contribution of the symmetry-based framework to model selection in a broader context.

\begin{ques}
    Can the trivial symmetry be used to investigate the model selection problem in more complex settings, such as in models formulated with PDEs? What about models formulated with systems of differential equations? Do these ideas extend naturally to models formulated with SDEs?
\end{ques}

\begin{ack}
    The author wishes to thank Johannes Borgqvist for his excellent guidance and support, as well as Alex Browning and Anubhab Ghosal for many helpful discussions. This work was supported by funding from Hertford College and the Crankstart Scholarship.
\end{ack}

\bibliography{biblio}
\bibliographystyle{alpha}

\appendix
\section{Appendix: Finding Other Symmetries}
\label{appendix}

    \paragraph{}
    Another method of finding symmetries (specifically for the Gompertz model, or other \textit{limiting} models) is via the following theorem, which gives sufficient conditions under which a symmetry of a model can be obtained by taking limits of a more general model.
    
    \begin{defn}
        We say that a model $\frac{d^ky}{dt^k} = \Omega(t,y,\frac{dy}{dt},\dots,\frac{d^{k-1}y}{dt^{k-1}};\boldsymbol{\Theta})$ with parameters $\boldsymbol{\Theta} = \left(\Theta_1,\dots,\Theta_n\right)$ is a \textit{generalisation} of $\frac{d^ky}{dt^k} = \omega(t,y,\frac{dy}{dt},\dots,\frac{d^{k-1}y}{dt^{k-1}};\boldsymbol{\theta})$ with parameters $\boldsymbol{\theta} = \left(\theta_1,\dots,\theta_m\right)$ if a particular choice, $\Tilde{\boldsymbol{\Theta}}$, of $\boldsymbol{\Theta}$ gives
        \[
            \omega(t,y,\frac{dy}{dt},\dots,\frac{d^{k-1}y}{dt^{k-1}};\boldsymbol{\theta}) = \Omega(t,y,\frac{dy}{dt},\dots,\frac{d^{k-1}y}{dt^{k-1}};\Tilde{\boldsymbol{\Theta}}).
        \]
    \end{defn}
    
    \begin{exam}
        The Richards model, $\dot{y} = \Omega(t,y;\boldsymbol{\Theta}) = r_{\mathrm{R}}y\left(1 - \left(\frac{y}{K}\right)^{\beta}\right)$, (with $\beta$ free) is a generalisation of the logistic model, $\dot{y} = \omega_{\mathrm{L}}(t,y;\boldsymbol{\theta}_{\mathrm{L}}) = r_{\mathrm{L}}y\left(1 - \frac{y}{K}\right)$, by setting 
        \[    
            \boldsymbol{\Theta} = (K,y_0,r_{\mathrm{R}},\beta) = (K,y_0,r_{\mathrm{L}},1).
        \]
        It also a generalisation of the Gompertz model, $\dot{y} = \omega_{\mathrm{G}}(t,y;\boldsymbol{\theta}_{\mathrm{G}}) = r_{\mathrm{G}}y\operatorname{log}\left(\frac{K}{y}\right)$ in the limit
        \[
            r_{\mathrm{G}}y\operatorname{log}\left(\frac{K}{y}\right) = \lim_{\substack{\beta r_{\mathrm{R}} \to r_{\mathrm{G}}\\ \beta \to 0}} r_{\mathrm{R}}y\left(1 - \left(\frac{y}{K}\right)^{\beta}\right).
        \]
        In this case, we call the Gompertz model a \textit{limiting model} of the Richards model.
    \end{exam}
    \begin{thm}
        \label{LMST}
        Suppose $\frac{d^ky}{dt^k} = \Omega(t,y,\frac{dy}{dt},\dots,\frac{d^{k-1}y}{dt^{k-1}};\boldsymbol{\Theta})$ is a generalisation of $\frac{d^ky}{dt^k} = \omega(t,y,\frac{dy}{dt},\dots,\frac{d^{k-1}y}{dt^{k-1}};\boldsymbol{\theta})$ by a choice of $\boldsymbol{\Theta}$ involving a limit. Further suppose that the $k$th jet space, $\mathcal{J}^{(k)}$, is convex and that, under this limit, the convergence of $\nabla\Omega$ is uniform on $\mathcal{J}^{(k)}$, then any symmetry of $\Omega$ descends to a symmetry of $\omega$ in the limit.
    \end{thm}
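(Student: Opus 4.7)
The plan is to argue entirely at the infinitesimal level, using Theorem \ref{1FT} to pass between infinitesimal generators and one-parameter local groups of diffeomorphisms, and exploiting the fact that the linearised symmetry condition (LSC) depends polynomially on the right-hand side of the ODE and its partial derivatives. In outline: take an infinitesimal generator $X = \xi\,\partial_t + \eta\,\partial_y$ that solves the LSC for $\Omega$; use uniform convergence of $\nabla\Omega$, together with convexity of $\mathcal{J}^{(k)}$ to bootstrap uniform convergence of $\Omega$ itself, to pass to the limit inside the LSC, yielding the LSC for $\omega$; then reintegrate via Theorem \ref{1FT} to recover a one-parameter local group satisfying conditions (C1)--(C3) for $\omega$.

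Concretely, the $k$th-order LSC obtained by requiring $X^{(k)}\bigl[y^{(k)} - \Omega\bigr] = 0$ on the solution manifold takes the form
\[
    \eta^{(k)} \;=\; \xi\,\Omega_t \;+\; \eta\,\Omega_y \;+\; \sum_{j=1}^{k-1} \eta^{(j)}\,\Omega_{y^{(j)}},
\]
where the standard prolongation coefficients $\eta^{(j)}$ are polynomials in $\xi$, $\eta$, and their partial derivatives, and are \emph{independent} of $\Omega$. The first-order LSC displayed in the excerpt, which is even quadratic in $\omega$ through the $\xi_y\,\omega^2$ term, is the special case $k=1$. Every term in the LSC is thus polynomial in $\Omega$ and in the components of $\nabla\Omega$. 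Convexity of $\mathcal{J}^{(k)}$ allows us to recover uniform convergence of $\Omega$ to $\omega$ from uniform convergence of $\nabla\Omega$ via the line-integral identity
\[
    \Omega(p) - \Omega(p_0) \;=\; \int_0^1 \nabla\Omega\bigl(p_0 + s(p - p_0)\bigr) \cdot (p - p_0)\,ds,
\]
anchored at a single reference point $p_0 \in \mathcal{J}^{(k)}$. Hence the LSC can be passed to the limit term by term, since each term depends polynomially on a uniformly convergent family. The limit equation is precisely the LSC for $\omega$, so $X$ is also an infinitesimal symmetry of $\omega$, and Theorem \ref{1FT} reintegrates $X$ to the desired one-parameter local group of diffeomorphisms.

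The main obstacle is the correct interpretation of the hypothesis when $k \geq 2$: in that case the LSC involves not only $\Omega_t$ and $\Omega_y$ but also $\Omega_{\dot y}, \dots, \Omega_{y^{(k-1)}}$, so "$\nabla\Omega$" must be read as the full first gradient of $\Omega$ viewed as a function on $\mathcal{J}^{(k)}$, and every such component must converge uniformly. Convexity is essential for the bootstrap from $\nabla\Omega$ to $\Omega$; without it one would need an additional anchoring hypothesis on each connected component of the jet space. A secondary technical point is to check that the prolongation coefficients $\eta^{(j)}$ are unaffected by the limit, which is immediate from their definition in terms of $\xi$, $\eta$, and their derivatives alone.
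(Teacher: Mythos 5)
Your proposal is correct and takes essentially the same route as the paper: both arguments work at the level of the linearised symmetry condition, use convexity plus uniform convergence of $\nabla\Omega$ (anchored by the pointwise convergence coming from the definition of a generalisation) to upgrade to uniform convergence of $\Omega$, and then pass to the limit in the LSC to conclude that a solution for $\Omega$ tends to a solution for $\omega$. The paper cites a multivariate generalisation of Rudin's Theorem 7.17 where you write out the line-integral identity explicitly, and your added remarks on the polynomial structure of the prolongation coefficients and the final reintegration via Theorem \ref{1FT} merely make explicit what the paper leaves implicit.
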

    \begin{proof} (For ODEs, although it is possible to generalise to PDEs, systems of ODEs, etc.)

        First denote $\frac{d^ny}{dt^n}$ by $y^{(n)}$, then the linearised symmetry condition for infinitesimals $\Xi$ and $H$ of the model $y^{(k)} = \Omega(t,y,y^{(1)},\dots,y^{(k-1)};\boldsymbol{\Theta})$ reads
        \[
            H^{(k)} = \Xi \Omega_t + H \Omega_y + H^{(1)} \Omega_{y^{(1)}} + \dots + H^{(k-1)} \Omega_{y^{(k-1)}}. \tag{LSC$\Omega$}
        \]
        Similarly, the linearised symmetry condition for the infinitesimals $\xi$ and $\eta$ of the model $y^{(k)} = \omega(t,y,y^{(1)},\dots,y^{(k-1)};\boldsymbol{\theta})$ is
        \[
            \eta^{(k)} = \xi \omega_t + \eta \omega_y + \eta^{(1)} \omega_{y^{(1)}} + \dots + \eta^{(k-1)} \omega_{y^{(k-1)}} \tag{LSC$\omega$}.
        \]
        Since $\Omega$ generalises $\omega$, we know that $\Omega \to \omega$ converges pointwise in the limit. By assumption, it now follows that $\Omega \to \omega$ converges \textit{uniformly} and that $\nabla\Omega \to \nabla\omega$ by a generalisation of Theorem 7.17 in \cite{Rud} due to \cite{MSE}. Hence, any solution $(\Xi,H)$ to (LSC$\Omega$) will tend to a solution of (LSC$\omega$) in the limit.

    \end{proof}

\end{document}